\theoremstyle{plain}
\newtheorem{theorem}{Theorem}
\begin{document}
\title{Throughput Analysis of CSMA: Technical Report}
\author{Xinghua~Sun,~\IEEEmembership{Member,~IEEE}~and~Lin~Dai,~\IEEEmembership{Senior Member,~IEEE}
%\author{Xinghua~Sun,~\IEEEmembership{Member,~IEEE}
%\thanks{Manuscript received March 3, 2015; revised November 5, 2015 and March 16, 2016; accepted May 15, 2016. The work of X. Sun was supported in part by National Natural Science Foundation of China under Grant 61401224, in part by Natural Science Foundation of Jiangsu Province of China under Grant BK20140882. The work of L. Dai was supported XXX}
%\thanks{X. Sun is with the Key Laboratory of Wireless Communications, Nanjing University of Posts and Telecommunications, Nanjing, China (email: xinghua.sun.cn@ieee.org).}
\thanks{X. Sun is with the School of Electronics and Communication Engineering, Sun Yat-sen University, China (email: sunxinghua@mail.sysu.edu.cn).}
\thanks{L. Dai is with the Department
of Electronic Engineering, City University of Hong Kong,
83 Tat Chee Avenue, Kowloon Tong, Hong Kong, China (email:  lindai@cityu.edu.hk).}
}

\maketitle

\begin{abstract}
In this technical report, the throughput performance of CSMA networks with two representative receiver structures, i.e., the collision model and the capture model, is characterized and optimized.  The analysis is further applied to an IEEE 802.11 network, which is a representative wireless network that adopts the CSMA mechanism, where the optimal initial backoff window sizes of nodes to achieve the maximum network throughput are derived and verified against simulation results.

\end{abstract}

\section{System Model and Preliminary Analysis}\label{I}
Consider a slotted CSMA network where $n$ nodes transmit to a common receiver.
Assume that each node always has packets in its buffer. With carrier sensing, each node should sense the channel first, and transmits only when the channel is idle. In this paper, nodes are assumed to be able to correctly sense the channel availability\footnote{Note that perfect sensing can be achieved even when nodes cannot hear each other's transmission. Specifically, as the receiver knows whether there are concurrent transmissions, it can broadcast this information so that all the nodes in the network can be informed of the channel availability. In the WiFi network with the request-to-send/clear-to-send (RTS/CTS) mechanism, for example, each WiFi node first sends an RTS frame, and then gets to know whether the channel is available by the CTS frame from the access point.}.
As Fig. \ref{agg} illustrates, the time axis of the aggregate channel is divided into multiple mini-slots with length $a$, where $a$ is the ratio of the propagation delay required by each node for sensing the channel to the packet length. The packet transmission lasts for one unit time, which is equal to ${1}/{a}$ mini-slots.
%The time axis of the channel is divided into multiple time slots with length $a<1$.
%Similar to that in \cite{Chan},
Assume that it takes each node $0 \le x \le 1/a$ mini-slots to know the failure of its transmitted packet\footnote{
Note that for wireless networks where nodes operate in the half-duplex mode, nodes are informed by the receiver about the outcome of their transmissions. How long for each node to be informed depends on the protocol design. In IEEE 802.11 DCF networks with the basic access mechanism, for instance, the receiver will send an ACK frame after the packet is successfully received. In this case, each node does not know whether the transmitted packet is successful or not until the end of the packet transmission. The failure-detection time $x$ is then determined by the packet length, i.e., $1/a$ mini-slots. On the other hand, if the RTS/CTS access mechanism is used, each node can send a short RTS frame to see whether its packet transmission can be successful or not by the response of the CTS frame from the receiver. In this case,  the failure-detection time $x$ is determined by the length of the RTS frame which is usually much shorter than the packet length, i.e., we have $x\ll1/a$.
%With $x=0$, for example, the receiver instantly feeds back the information of failure transmission, which is an ideal case.
%With $x=1/a$, on the other hand, each node operates in the half-duplex mode, and the receiver feeds back the information until the end of the packet frame.
%An example is the request-to-send/clear-to-send (RTS/CTS) mechanism in IEEE 802.11 DCF networks.
} and abort the ongoing transmission.

%Note: for the capture case, the information can only be provided by the receiver.

%Depending on whether each concurrently-transmitted packet is successfully received or not,
%the aggregate channel can be in the ``success" or ``failure" state. In particular, the aggregate channel is in the failure state only when all the concurrently-transmitted packets
%fail while it is in the success state as long as one or more of these packets are successfully received.
%\begin{figure}[htbp]
%\centering
%\includegraphics[width=100mm]{agg.eps}
%\caption{Aggregate channel of CSMA networks.}
%\label{agg}
%\end{figure}

\begin{figure}[htbp]
\centering
\includegraphics[width=120mm]{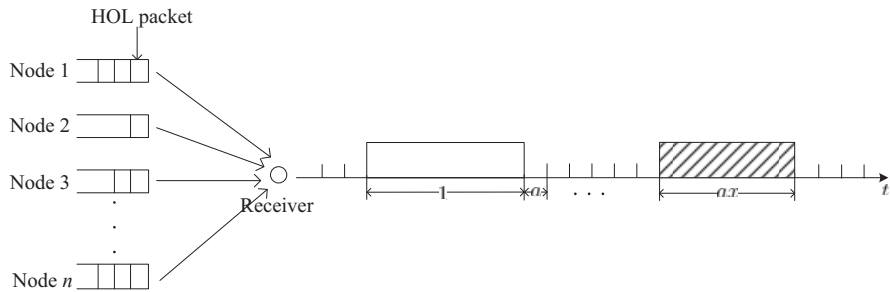}
\caption{Graphic illustration of an $n$-node CSMA network.}
\label{agg}
\end{figure}

%\vspace{-1cm}  %µ÷ÕûͼƬÓëÉÏÎĵĴ¹Ö±¾àÀë
For node $i$, the received power $P_{r,i}$ can be written as $P_{r,i} =P_{t,i} \cdot|g_i|^2\cdot|h_i|^2$,
%\begin{align}\label{pri}
%P_{r,i} =P_{t,i} \cdot|g_i|^2\cdot|h_i|^2,
%\end{align}
where $P_{t,i}$ denotes the transmission power of node $i$, and $g_i$ and $h_i$ denote the large-scale and small-scale fading coefficients, respectively.
Assume block Rayleigh fading, i.e., $|h_i|^2\sim\exp(1)$ and $|h_i|^2$ varies from packet to packet,
%keeps unchanged in one packet transmission time, yet varies from packet to packet.
%We assume that the channel undergoes Rayleigh fading, i.e., $|h_i|^2\sim\exp(1)$, and the small-scale fading coefficient changes slowly, i.e., $|h_i|^2$ keeps unchanged in one packet transmission time, yet varies from packet to packet.
%Due to the slow-varying nature, the large-scale fading coefficients are usually available at the transmitter side through channel measurement
and each node is aware of the large-scale fading coefficient by channel measurement, and thus can perform power control to combat the large-scale fading. In particular, each node sends packets with the transmission power $P_{t,i}=\frac{P}{|g_i|^2}$.
%\begin{align}\label{pti}
%P_{t,i}=\frac{P}{|g_i|^2}.
%\end{align}
%\footnote{Without power control, each node would have a distinct mean received power. Those who have a larger mean received power may have a higher chance to capture the channel according to the SINR-based capture model. Consequently, they can achieve a higher throughput, leading to unfairness. To ensure that each node has the same throughput, power control needs to be performed.}.
As a result, the mean received power is the same for each node. The mean received SNR can then be written as $\rho=P/\sigma^2$, where $\sigma^2$ denotes the noise power.

%follows a probability-based backoff scheme to retransmit the packet. Specifically, each node would access at the beginning of each \emph{idle} mini-slot with the probability $q_i$ after $i$th transmission failure, until the packet is successfully transmitted.
\vspace{-0.5cm}
\subsection{Transmitter Model}
It has been shown in \cite{CSMA_Aloha} that the performance of CSMA networks is crucially determined by activities of HOL packets of nodes' queues. The behavior of each HOL packet in each node's queue can be characterized by a discrete-time Markov renewal process $(\textbf{\textit{X}},\textbf{\textit{V}})=\{(X_j,V_j), j=0,1,\ldots\}$, where $X_j$  denotes the state of one HOL packet at the $j$-th transition and $V_j$ denotes the epoch at which the $j$-th transition occurs.
%an embedded Markov chain $\{\textbf{\textit{X}}\}$.

As Fig. \ref{Chain} illustrates, the states of $\{X_j\}$ can be divided into three categories: 1) waiting to request (State $\text{R}_i$, $i=0,\ldots,K$), 2) failure (State
$\text{F}_i$, $i=0,\ldots,K$) and 3) successful transmission (State $\text{T}$).
A State-$\text{R}_i$ HOL packet has a transmission probability of $q_i$, $i=0,\ldots,K$, at each idle mini-slot. It moves to State $\text{T}$ if it transmits and the transmission
is successful. Otherwise, if the transmission fails, it moves to State $\text{F}_i$ and then shifts to State $\text{R}_{i+1}$.
%A HOL packet moves from State $\text{R}_i$ to State $\text{T}$ if the request of transmission is successful. Otherwise, it stays at State   $\text{R}_i$ until the end of the transmission failure and then shifts to State $\text{R}_{i+1}$. At State  $\text{R}_i$, each node would retransmit the HOL packet with probability $q_i$ at each idle mini-slot.
If the HOL packet has experienced more than $K$ transmission failures, its transmission probability remains to be $q_K$.
Here $K$ is referred to as the cutoff phase.  To alleviate channel contention, $\{q_i\}_{i=0,\ldots,K}$ is usually assumed to be a monotonic non-increasing sequence. Without loss of generality, let $q_i = q_0 \cdot \mathcal{Q}(i)$, where $q_0$ is the initial transmission probability and $\mathcal{Q}(i)$ is an arbitrary monotonic non-increasing function of $i$ with $\mathcal{Q}(0) = 1$ and $\mathcal{Q}(i)\le\mathcal{Q}(i-1)$, $i = 1,\ldots,K$.
\begin{figure}[htbp]
\centering
\includegraphics[width=100mm]{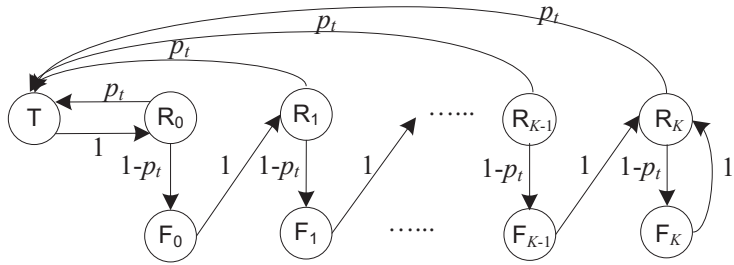}
\caption{Embedded Markov chain \{$X_j$\} of the state transition
process of each HOL packet in CSMA networks.}
%\vspace{-0.8cm}
\label{Chain}
\end{figure}

In Fig. \ref{Chain}, $p_t$ represents the probability of successful transmission of HOL packets at mini-slot $t$ given that the channel is idle at mini-slot $t-1$.
%It can be shown that the embedded Markov chain in Fig. \ref{Chain} is uniformly strongly ergodic if and only if the limit $\lim_{t\to\infty}p_t=p$ exists.
Let $\{\tilde{\pi }^{}_{i}\}$ denote the limiting state probabilities of the Markov renewal process. We then have
\begin{equation} \label{ls_def}
\tilde{\pi }^{}_{i} =\frac{\pi^{} _{i} \cdot \tau ^{}_{i} }{\sum _{j\in
S}\pi ^{}_{j} \cdot \tau ^{}_{j}  } ,
\end{equation}
$i\in S$, where $S=\{\text{T},\text{F}_0,\ldots,\text{F}_K,\text{R}_0,\ldots,\text{R}_K\}$ is the state space of $\textbf{\textit{X}}^{}$, $\{\pi_i\}_{i\in S}$ denotes the steady-state probability distribution of the
embedded Markov chain, and $\tau_i$ denotes the mean holding time in each state ${i\in S}$.
%We have
% \begin{equation} \label{limiting}
%\tilde{\pi }_{i} =\frac{\pi _{i} \cdot \tau _{i} }{{{\pi
%}_{T}}\cdot {{\tau }_{T}}+\sum\nolimits_{i=0}^{M-1}{{{\pi
%}_{{{F}_{i}}}}\cdot {{\tau }_{F}}}+\sum\nolimits_{i=0}^{M-1}{{{\pi
%}_{R_i}}\cdot {{\tau }_{R_i}}}}.
%\end{equation}
Specifically, the probability of being in State $\textmd{T}$ for the HOL packet, $\tilde{\pi }_{T}^{}$, has been derived in \cite{CSMA_Aloha} as
\begin{align}\label{sr}
\tilde{\pi}_{T}=\frac{1}{1+xa\cdot\frac{1-p}{p}{+}
\frac{a}{\alpha p}\cdot\left(\sum
_{i=0}^{K{-}1}\frac{p(1{-}p)^{i}}{q_i}{+}\frac{(1{-}p)^K}{q_K}\right)},
\end{align}
where $p=\lim_{t\to\infty}p_t$ is the steady-state probability of successful transmission of HOL packets given that the channel is idle, $\alpha$ denotes the steady-state probability of sensing the channel idle, and $q_i$ is the transmission probability of a HOL packet in State $\text{R}_i$ given that the channel is idle. Note that $\tilde{\pi}_{T}$ is the service rate of each node's queue as the queue has a successful output if and only if the HOL packet is in State $\text{T}$.

Similar to \cite{Aloha_cap}, it is assumed that
%the receiver has perfect channel state information while
the transmitters are unaware of the
instantaneous realizations of the small-scale fading coefficients. As a result, each
transmitter independently encodes its information at a constant rate $R$ bit/s/Hz. The network sum rate $R_s$, which is defined as the average received information rate, can be written as \cite{Aloha_cap}
\begin{align} \label{Rs}
R_s=\hat{\lambda}_{\text{out}}\cdot R,
\end{align}
where the network throughput $\hat{\lambda}_{\text{out}}$ is the average number of successfully decoded packets per time slot, which depends on the transmission probabilities $\{q_i\}_{i=0,\dots,K}$ of each node and the receiver structure.

%Let $\omega$ denote the steady-state probability that one node attempts to access the channel given that the channel is sensed idle. According to the state transition process of each HOL packet shown in Fig. \ref{Chain}, we have
%\begin{align} \label{Eq1}
%&\omega{=}\Pr \{ {\rm one \; HOL \; packet  \;  is \; in \; State\;} \text{R}_i,i=0,\ldots,K, \; \notag\\
%&{\rm and  \; attempts \; channel\; access}|{\rm idle  \; channel}\}=
%\sum _{i=0}^{K}\tilde{\pi }_{R_{i} } q_{i}.
%\end{align}

\subsection{Receiver Model}\label{II-B}
%Upon transmission failure, each node needs to retransmit the packet until the packet is successfully transmitted.
%At the receiver side, a packet can be successfully decoded if its received signal-to-interference-plus-noise ratio (SINR) is above the threshold
%\begin{align}\label{rate}
%\mu=2^R-1.
%\end{align}
In the literature, different receiver structures have been proposed, among which the collision model and the capture model are two representative ones. With the classic collision model, one packet can be successfully decoded only if there are no concurrent transmissions.
Although the collision model greatly simplifies the analysis, it leads to pessimistic evaluation of the network performance. In practice, one packet could be successfully ``captured" even with multiple concurrent transmissions, as long as the received power is sufficiently high compared to that of the interference. With the capture model,
each node's packet is decoded independently by treating
others¡¯ as background noise.
%overly pessimistic if there exists a large difference of received power.
In this paper, we consider the above two receiver structures:

1) Collision model:  one packet can be successfully decoded if and only if
there are no concurrent transmissions and its received SNR is above a certain threshold;

2) Capture model: one packet can be successfully
decoded as long as its received SINR is above a certain threshold.

Let
\begin{equation}\label{rate}
\mu=2^R-1
\end{equation}
denote the threshold at the receiver. For each node's
packet, if its received SNR (SINR) exceeds the receiver threshold $\mu$ with the collision (capture) model, then by random coding the error probability is
exponentially reduced to zero as the packet length goes to infinity. In this paper, we assume that the packet length is sufficiently large such that the rate $R$ can be supported for reliable communications\footnote{Note that the error probability may become non-negligible when the packet length is not sufficiently large. In that case, the receiver threshold $\mu$ could be dependent on not only the information encoding rate $R$, but also the error probability that is determined by the packet length and the coding/decoding schemes.}.
%In the future, \eqref{rate} should be extended to further include the effect of finite packet length.}.

%For each node¡¯s packet, if its received SNR or SINR exceeds the receiver threshold $\mu$, it can
%be successfully decoded and the transmission rate
%\begin{equation}\label{}
%R=\log_2(1+\mu)
%\end{equation}
%can be supported for reliable
%communications with the collision and capture models, respectively.

%For both the receiver structures, the transmission behavior of HOL packets can be characterized by the discrete-time Markov renewal process in Fig. \ref{Chain}. Yet the steady-state probability of successful transmission of HOL packets given that the channel is idle, $p$, varies under different receiver structures, leading to distinct network throughput and sum rate performance.

In the following, we will characterize the sum rate performance of CSMA networks under the above two receiver structures. For differentiation purpose, performance metrics are marked with superscript ``Col" for the collision model and ``Cap" for the capture model, respectively.

\section{Network Throughput}\label{II}
%According to \eqref{Rs}, the network sum rate $R_s$ is a product of the network throughput $\hat{\lambda}_{\text{out}}$ and the information encoding rate $R$, which depends on the transmission probabilities $\{q_i\}_{i=0,\dots,K}$ and the receiver threshold $\mu$. In this section, we aim to maximize the network sum rate by tuning $\{q_i\}_{i=0,\dots,K}$ and $\mu$.
%
%Specifically, define the maximum sum rate as
%$C^{}=\max_{\mu,\{q_i\}} R_s$.
%By combining \eqref{Rs} and \eqref{rate},  the maximum sum rate can be written as
%\begin{equation}\label{ADD7}
%C^{}=\max_{\mu>0} \hat{\lambda}_{\max}\cdot \log_2(1+\mu^{}),
%\end{equation}
%where $\hat{\lambda}^{\text{}}_{\max}=\max_{\{q_i\}}\hat{\lambda}^{\text{}}_{\text{out}}$ denotes the maximum network throughput. In the following, we will first characterize the maximum network throughput $\hat{\lambda}^{\text{}}_{\max}$ in Section IV-A, and then derive the maximum sum rate $C$ by optimizing over the receiver threshold $\mu$ in Section IV-B.

%\subsection{Maximum Network Throughput}\label{IV-A}
In saturated conditions, the throughput of each node is equal to the service rate of each node's queue. The network throughput $\hat{\lambda}^{}_{\text{out}}$ can then be written as
\begin{equation}\label{th_def}
\hat{\lambda}^{}_{\text{out}}=n\tilde{\pi }_{T }=\frac{n}{1+xa\cdot\frac{1-p}{p}{+}
\frac{a}{\alpha p}\cdot\left(\sum
_{i=0}^{K{-}1}\frac{p(1{-}p)^{i}}{q_i}{+}\frac{(1{-}p)^K}{q_K}\right)},
\end{equation}
where $\tilde{\pi }_{T }$ is the probability of being in State
$\textmd{T}$ for the HOL packet, which is given in \eqref{sr}.
It can be seen from \eqref{th_def} that the network throughput $\hat{\lambda}^{}_{\text{out}}$ critically depends on the steady-state probability of successful transmission of HOL packets given that the channel is idle, $p$.  In the following, we will first characterize the network steady-state point in saturated conditions based on the fixed-point equation of $p$, and then obtain the maximum network throughput for both the collision and capture models.

\subsubsection{Steady-state Point in Saturated Conditions}\label{III-A-1}
%\begin{figure}[htbp]
%\centering
%\includegraphics[width=3.2in,height=2.4in]{pA_compare.eps}
%\caption{Network steady-state point in saturated conditions $p_A$ versus the receiver threshold $\mu$. $q_0=0.1$, $\mathcal{Q}(i)=2^{-i}$, $K=4$ and $n=20$.} \label{pA_compare}
%%\vspace{-0.8cm}
%\end{figure}
It is shown in Appendix \ref{app0} that the network steady-state point $p^{\text{Col}}_A$ for the collision model is the single non-zero root of the following fixed-point equation
\begin{align} \label{Eq7}
&p^{\text{Col}}=\exp\left\{{-}\frac{\mu}{\rho}\right\}{\cdot}\exp \left\{{-}\frac{ n}{\frac{\alpha^{\text{Col}}}{a}{+}\sum
_{i=0}^{K{-}1}\frac{p^{\text{Col}}(1{-}p^{\text{Col}})^{i}}{q_i}{+}\frac{(1{-}p^{\text{Col}})^K}{q_K}} \right\}\notag\\
&\mathop{\approx }\limits^{{\rm with \;large}\; K}\exp\left\{-\frac{\mu}{\rho}\right\}\cdot \exp \left\{-\frac{ n}{\sum
_{i=0}^{K{-}1}\frac{p^{\text{Col}}(1{-}p^{\text{Col}})^{i}}{q_i}{+}\frac{(1{-}p^{\text{Col}})^K}{q_K}} \right\},
\end{align}
where the probability of sensing the channel idle $\alpha^{\text{Col}}$ is given by
\begin{align} \label{Eq9}
\alpha^{\text{Col}}=\frac{a}{(x+1)a-(1-ax)p^{\text{Col}}\left(\frac{\mu}{ \rho}+\ln p^{\text{Col}}\right)-ax\exp\left\{\frac{\mu}{ \rho}\right\}p^{\text{Col}}}.
\end{align}
On the other hand, for the capture model, the network steady-state point $p^{\text{Cap}}_A$ is the single non-zero root of the following fixed-point equation
\begin{align} \label{pA_app}
&p^{\text{Cap}}=\exp\left\{{-}\frac{\mu}{\rho}\right\}{\cdot}\exp \left\{{-}\frac{\frac{\mu}{1+\mu}\cdot n}{\frac{\alpha^{\text{Cap}}}{a}{+}\sum
_{i=0}^{K{-}1}\frac{p^{\text{Cap}}(1{-}p^{\text{Cap}})^{i}}{q_i}{+}\frac{(1{-}p^{\text{Cap}})^K}{q_K}} \right\}\notag\\
&\mathop{\approx }\limits^{{\rm with \;large}\; K}\exp\left\{-\frac{\mu}{\rho}\right\}\cdot \exp \left\{-\frac{\frac{\mu}{1+\mu}\cdot n}{\sum
_{i=0}^{K{-}1}\frac{p^{\text{Cap}}(1{-}p^{\text{Cap}})^{i}}{q_i}{+}\frac{(1{-}p^{\text{Cap}})^K}{q_K}} \right\},
\end{align}
where the probability of sensing the channel idle $\alpha^{\text{Cap}}$ can be obtained as
%\begin{align} \label{alfa}
%\alpha\approx\frac{1}{1+\tau_T(1-\exp\{-n\omega\})-(\tau_T-\tau_F)\sum_{i=1}^n\left(1-(\frac{1}{1+\mu})^{i-1}\exp\{-\frac{\mu}{\rho}\}\right)^i\binom{n}{i}\omega^i(1-\omega)^{n-i}},
%\end{align}
\begin{align} \label{alfa}
&\alpha^{\text{Cap}}={a}/\Bigg(1{+}a{-}\exp\left\{\frac{1{+}\mu}{\rho}\right\}(p^{\text{Cap}})^{\frac{1+\mu}{\mu}}{-}(1{-}ax)
\sum_{i=1}^n\left(1{-}\frac{\exp\{{-}\frac{\mu}{\rho}\}}{(1{+}\mu)^{i-1}}\right)^i\binom{n}{i}\notag\\
&\cdot\left(\left({-}\ln p^{\text{Cap}}{-}\frac{\mu}{\rho}\right){\cdot} \frac{1{+}\mu}{n\mu}\right)^i\left(1{-}\left({-}\ln p^{\text{Cap}}{-}\frac{\mu}{\rho}\right){\cdot} \frac{1{+}\mu}{n\mu}\right)^{n{-}i}\Bigg).
\end{align}
%according to Appendix \ref{app2}.

It is indicated in \eqref{Eq7} and \eqref{pA_app} that both $p^{\text{Col}}_A$ and $p^{\text{Cap}}_A$ are determined by the network size $n$, the receiver threshold $\mu$, the mean received SNR $\rho$ and the sequence of transmission probabilities $\{q_i\}_{i=0,\ldots,K}$. For the collision model, \eqref{Eq7} reduces to Eq. (51) in \cite{CSMA_Aloha} if $\rho\to\infty$. In this case, one packet can be successfully received as long as there are no concurrent transmissions.
%Fig. \ref{pA_compare} illustrates the network steady-state point for both the collision and capture models. It can be seen that
%$p^{\text{Col}}_A$ approaches $p^{\text{Cap}}_A$ when the receiver threshold $\mu$ is large. In this case, the capture model reduces to the collision model as only one packet can be successfully decoded at one time.

\subsubsection{Maximum Network Throughput}\label{}
By combining  \eqref{th_def}, \eqref{Eq7} and \eqref{Eq9}, the network throughput with the collision model, $\hat{\lambda}^{\text{Col}}_{\text{out}}$, can be obtained as
\begin{equation}\label{Eq11}
\hat{\lambda}^{\text{Col}}_{\text{out}}=
\frac{\frac{1}{ax}}{\frac{1+\frac{1}{x}-\exp\left\{\frac{\mu}{ \rho}\right\}\cdot p^{\text{Col}}_A}{-p^{\text{Col}}_A\left(\frac{\mu}{ \rho}+\ln p^{\text{Col}}_A\right)}+\frac{1}{ax}-1}.
\end{equation}
With the capture model, the network throughput $\hat{\lambda}^{\text{Cap}}_{\text{out}}$ is given by
\begin{align} \label{th}
&\hat{\lambda}^{\text{Cap}}_{\text{out}}={-p_A^{\text{Cap}}\left(\frac{1+\mu}{ \rho}+\frac{1+\mu}{\mu}\ln p_A^{\text{Cap}}\right)}/\Bigg(1+a-\exp\left\{\frac{1+\mu}{\rho}\right\}(p_A^{\text{Cap}})^{\frac{1+\mu}{\mu}}-(1-ax)\notag\\
&\sum_{i=1}^n\left(1{-}\frac{\exp\left\{{-}\frac{\mu}{\rho}\right\}}{(1{+}\mu)^{i-1}}\right)^i\binom{n}{i}\left(\left({-}\ln p_A^{\text{Cap}}{-}\frac{\mu}{\rho}\right){\cdot} \frac{1{+}\mu}{n\mu}\right)^i\left(1{-}\left({-}\ln p_A^{\text{Cap}}{-}\frac{\mu}{\rho}\right){\cdot} \frac{1{+}\mu}{n\mu}\right)^{n-i}\Bigg),
\end{align}
by combining \eqref{th_def}, \eqref{pA_app} and \eqref{alfa}.

For given mini-slot length $a$, the
failure-detection time $x$, the receiver threshold $\mu$ and the mean received SNR $\rho$, both $\hat{\lambda}^{\text{Col}}_{\text{out}}$ and $\hat{\lambda}^{\text{Cap}}_{\text{out}}$ are functions of the network steady-state point according to \eqref{Eq11} and \eqref{th}, which in turn are determined by backoff parameters $\{q_i\}$ according to \eqref{Eq7} and \eqref{pA_app}. %Numerical results indicate that when the receiver threshold $\mu$ is below a certain threshold $\mu_0$, the network throughput is maximized when $q_i=q^{*}_i=1$.
To maximize the network throughput, the backoff parameters $\{q_i\}$ should be carefully tuned.
%Define the maximum network throughput as $\hat{\lambda}^{\text{}}_{\max}=\max_{\{q_i\}}\hat{\lambda}^{\text{}}_{\text{out}}$.
The following theorems present the maximum network throughput and the corresponding optimal backoff parameters for the collision model and the capture model, respectively.

%$\hat{\lambda}^{\text{Col}}_{\max}=\max_{\{q_i\}} \hat{\lambda}^{\text{Col}}_{\text{out}}$ and the optimal backoff parameters ${\{q^{*,{\text{Col}}}_i\}}$.

\begin{theorem}\label{t1}
With the collision model, the maximum network throughput $\hat{\lambda}^{\text{Col}}_{\max}=\max_{\{q_i\}}\hat{\lambda}^{\text{Col}}_{\text{out}}$ is given by
%\begin{equation}\label{}
%\hat{\lambda}^{}_{\max}=
%\frac{\frac{1}{ax}}{\exp\{\frac{\mu^{}}{\rho}\}\cdot\frac{1}{-{\it {\mathbb W}}_{0}\left(-\tfrac{1}{e(1+1/x)}\right)}+\frac{1}{ax}-1},
%\end{equation}
\begin{equation}\label{th_max_col}
\hat{\lambda}^{\text{Col}}_{\max}=
\frac{{-{\it {\mathbb W}}_{0}\left(-\tfrac{1}{e(1+1/x)}\right)}}{\exp\{\frac{\mu^{}}{\rho}\}\cdot ax-(1-ax){{\it {\mathbb W}}_{0}\left(-\tfrac{1}{e(1+1/x)}\right)}},
\end{equation}
where ${\it {\mathbb W}}_{0}(\cdot)$ is the principal branch of the
Lambert W function \cite{Corlessa}. $\hat{\lambda}^{\text{Col}}_{\max}$ is achieved when
\begin{equation}\label{temp}
q^{*,{\text{Col}}}_i=\hat{q}^{\text{Col}}_0\mathcal{Q}(i),
\end{equation}
$i=0,\ldots,K$, where $\hat{q}^{\text{Col}}_0$ is given by
\begin{equation}\label{qm_col}
\hat{q}^{\text{Col}}_0=-\frac{\ln \psi^{*,{\text{Col}}}}{n}\left({\sum
_{i=0}^{K{-}1}\frac{e^{-\frac{\mu}{\rho}}\psi^{*,{\text{Col}}}\left(1{-}e^{-\frac{\mu}{\rho}}\psi^{*,{\text{Col}}}\right)^{i}}{\mathcal{Q}(i)}{+}\frac{\left(1-e^{-\frac{\mu}{\rho}}\psi^{*,{\text{Col}}}\right)^K}{\mathcal{Q}(K)}}\right),
\end{equation}
with $\psi^{*,{\text{Col}}}={-(1+1/x){\it {\mathbb W}}_{0}\left(-\tfrac{1}{e(1+1/x)}\right)}$.
\end{theorem}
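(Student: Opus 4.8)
\emph{Proof plan.}
The starting point is the observation that, by \eqref{Eq11}, the collision‑model throughput $\hat{\lambda}^{\text{Col}}_{\text{out}}$ depends on the whole backoff sequence $\{q_i\}$ \emph{only} through the scalar steady‑state point $p^{\text{Col}}_A$. Maximizing over $\{q_i\}$ therefore reduces to a one‑dimensional problem. Writing $c=\exp\{\mu/\rho\}$, $b=1+1/x$ and using $\mu/\rho+\ln p=\ln(cp)$, \eqref{Eq11} becomes $\hat{\lambda}^{\text{Col}}_{\text{out}}=\big(1/(ax)\big)\big/\big(g(p^{\text{Col}}_A)+1/(ax)-1\big)$ with $g(p)=(b-cp)\big/\!\left(-p\ln(cp)\right)>0$ for $0<cp<1$; since $1/(ax)-1\ge 0$ the denominator stays positive, so maximizing $\hat{\lambda}^{\text{Col}}_{\text{out}}$ is equivalent to minimizing $g$ over the admissible range of $p^{\text{Col}}_A$.

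Next I would carry out the minimization. Substituting $u=cp$ turns $g$ into $c\,(b-u)\big/\!\left(-u\ln u\right)$; differentiating in $u$ and clearing the denominator, the stationarity condition collapses to $b\ln u+b-u=0$, i.e.\ $u\,e^{-u/b}=e^{-1}$. Setting $v=-u/b$ this is $v\,e^{v}=-\tfrac{1}{be}$, whose only solution with $u\in(0,1)$ is $v={\mathbb W}_{0}\!\left(-\tfrac{1}{be}\right)$ — the principal branch, which lies in $(-1,0)$ because $b>1$ forces $-\tfrac1{be}\in(-\tfrac1e,0)$, while the ${\mathbb W}_{-1}$ branch would give $u\ge b>1$. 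Thus the stationary point is $u^{*}=-b\,{\mathbb W}_{0}(-\tfrac1{be})=\psi^{*,\text{Col}}$ and $p^{*}=u^{*}/c=e^{-\mu/\rho}\psi^{*,\text{Col}}$, with $0<\psi^{*,\text{Col}}<1$. Since $g(p)\to+\infty$ as $cp\to0^{+}$ and as $cp\to1^{-}$, this stationary point is the global minimizer. Using the identity $-\ln u^{*}=(b-u^{*})/b$ implied by the stationarity equation collapses $g(p^{*})$ to $cb/u^{*}=-c/{\mathbb W}_{0}(-\tfrac1{be})$; substituting this into the compact throughput formula above and clearing denominators by $-ax\,{\mathbb W}_{0}(-\tfrac1{be})>0$ yields exactly \eqref{th_max_col}.

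To recover the optimal parameters, I would invert the large‑$K$ fixed‑point equation in \eqref{Eq7}. Imposing $p^{\text{Col}}_A=p^{*}=e^{-\mu/\rho}\psi^{*,\text{Col}}$ forces $\psi^{*,\text{Col}}=\exp\{-n/D\}$, where $D=\sum_{i=0}^{K-1}\tfrac{p^{*}(1-p^{*})^{i}}{q_i}+\tfrac{(1-p^{*})^{K}}{q_K}$, i.e.\ $D=-n/\ln\psi^{*,\text{Col}}$. Plugging in the prescribed shape $q_i=\hat q^{\text{Col}}_0\mathcal{Q}(i)$ makes $D$ linear in $1/\hat q^{\text{Col}}_0$, and solving for $\hat q^{\text{Col}}_0$ gives precisely \eqref{qm_col}; that $q^{*,\text{Col}}_i=\hat q^{\text{Col}}_0\mathcal{Q}(i)$ is a valid monotone non‑increasing sequence is immediate from $\mathcal{Q}$ being monotone non‑increasing with $\mathcal{Q}(0)=1$.

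The step I expect to be the main obstacle is not the calculus above but the \emph{attainability} claim: one must verify that $p^{*}$ actually lies in the set of steady‑state points realizable by admissible $\{q_i\}$ — equivalently, that the $\hat q^{\text{Col}}_0$ obtained above is a legitimate transmission probability and that the sequence it defines produces $p^{\text{Col}}_A=p^{*}$ and not some spurious root. This rests on the facts, established in the preliminary analysis (and in \cite{CSMA_Aloha}), that \eqref{Eq7} has a unique non‑zero root which varies continuously and monotonically with $q_0$ over an interval whose upper endpoint is $e^{-\mu/\rho}$ (attained as $q_0\to0$); combined with the bound $0<\psi^{*,\text{Col}}<1$ — so that $p^{*}<e^{-\mu/\rho}$ — this places $p^{*}$ strictly inside the attainable interval under the paper's standing assumptions, closing the argument.
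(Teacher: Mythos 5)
Your proposal is correct and follows essentially the same route as the paper's own proof in Appendix B: the substitution $u=cp$ is exactly the paper's $\psi^{\text{Col}}=e^{\mu/\rho}p^{\text{Col}}_A$, the one-dimensional minimization of $(1+1/x-\psi)/(-\psi\ln\psi)$ is the same, and the Lambert-W stationary point and the inversion of \eqref{Eq7} to get $\hat q_0^{\text{Col}}$ coincide with the paper's steps. You in fact supply more detail than the paper (which dismisses the minimization with ``it can be easily shown''), including the explicit stationarity condition $b\ln u+b-u=0$, the branch selection, and the attainability discussion of $\psi^{*,\text{Col}}\in[e^{-n},1)$ that the paper merely asserts.
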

\begin{proof}
See Appendix \ref{app1}.
\end{proof}

\begin{theorem}\label{t3}
With the capture model, the maximum network throughput $\hat{\lambda}^{\text{Cap}}_{\max}=\max_{\{q_i\}}\hat{\lambda}^{\text{Cap}}_{\text{out}}$ is given by
\begin{equation}\label{th_max}
\hat{\lambda}^{\text{Cap}}_{\max}=\left\{\!\!\!
\begin{array}{ll}
\frac{-\exp\left\{-\frac{\mu}{\rho}\right\}(\psi^{*,{\text{Cap}}})^{\frac{\mu}{1+\mu}}\ln \psi^{*,{\text{Cap}}}}{1+a-\psi^{*,{\text{Cap}}}-(1-ax)\sum_{i=1}^n\left(1-\frac{\exp\{-\frac{\mu}{\rho}\}}{(1+\mu)^{i-1}}\right)^i\binom{n}{i}\left(-\frac{\ln\psi^{*,{\text{Cap}}}}{n}\right)^i\left(1+\frac{\ln\psi^{*,{\text{Cap}}}}{n}\right)^{n-i}} \,\,\,\,\,\,\,\,\,\,\,\,\,\,\,\text{if}\,\,\,\,\,\, \mu\ge \mu^{}_0\\
\frac{n\exp\left\{-\frac{\mu_{}}{\rho}\right\}\exp\left\{-\frac{n\mu_{}}{1+\mu_{}}\right\} }{1+a-\exp\{-n\}-(1-ax)\left(1-\frac{\exp\left\{-\frac{\mu_{}}{\rho}\right\}}{(1+\mu)^{n-1}}   \right)^n}\,\,\,\,\,\,\,\,\,\,\,\,\,\,\,\,\,\,\,\,\,\,\,\,\,\,\,\,\,\,\,\,\,\,\,\,\,\,\,\,\,\,\,\,\,\,\,\,\,\,\,\,\,\,\,\,\,\,\,\,\,\,\,\,\,\,\,\,\,\,\,\,\,\,\,\,\,\,\,\,\,\,\,\,\text{otherwise,}\,\,\,\,\,\,
\end{array}\right.
\end{equation}
where
\begin{equation}\label{mu0}
\mu^{}_0=\frac{1+a-\exp\{-n\}}{\left(1{-}\frac{1}{n}\right)(1{+}a{-}\exp\{-n\}){+}\exp\{{-}n\}{-}(1{-}ax)\left(
\left(1{-}\frac{\exp\left\{{-}\frac{\mu_{}}{\rho}\right\}}{(1{+}\mu)^{n-1}}   \right)^n{-}\left(1{-}\frac{\exp\left\{{-}\frac{\mu_{}}{\rho}\right\}}{(1{+}\mu)^{n-2}}   \right)^{n-1}\right)}-1
\end{equation}
and
$\psi^{*,\text{Cap}}$ is the root of
\begin{align}\label{psi}
&\left(\frac{\mu}{1+\mu}\ln \psi+1\right)\cdot\left(1+a-\psi\right)+\psi\ln \psi-(1-ax)\sum_{i=1}^n\left(1-\frac{\exp\left\{-\frac{\mu}{\rho}\right\}}{({1+\mu})^{i-1}}\right)^i\binom{n}{i}\left(-\frac{\ln\psi}{n}\right)^i\notag\\
&\left(1+\frac{\ln\psi}{n}\right)^{n-i-1}\left( \left(\frac{\mu}{1+\mu}\ln \psi+1\right)\left(1+\frac{\ln\psi}{n}\right)-i-\ln\psi\right)=0.
\end{align}
$\hat{\lambda}^{\text{Cap}}_{\max}$ is achieved when
\begin{equation}\label{qm_cap}
q^{*,\text{Cap}}_i=\left\{\!\!\!
\begin{array}{ll}
\hat{q}^{\text{Cap}}_0\mathcal{Q}(i)\,\,\,\,\,\,\,\,\,\,\,\,\,\text{if}\,\,\,\,\,\, \mu\ge \mu^{}_0\\
1 \,\,\,\,\,\,\,\,\,\,\,\,\,\,\,\,\,\,\,\,\,\,\,\,\,\,\,\,\,\,\,\text{otherwise,}\,\,\,\,\,\,
\end{array}\right.
\end{equation}
$i=0,\ldots,K$, where $\hat{q}^{\text{Cap}}_0$ is given by
\begin{equation}\label{temp2}
\hat{q}^{\text{Cap}}_0=-\frac{\ln \psi^{*,\text{Cap}}}{n}\left({\sum
_{i=0}^{K{-}1}\frac{e^{-\frac{\mu}{\rho}}(\psi^{*,\text{Cap}})^{\frac{\mu}{1+\mu}}\left(1{-}e^{-\frac{\mu}{\rho}}(\psi^{*,\text{Cap}})^{\frac{\mu}{1+\mu}}\right)^{i}}{\mathcal{Q}(i)}{+}\frac{\left(1-e^{-\frac{\mu}{\rho}}(\psi^{*,\text{Cap}})^{\frac{\mu}{1+\mu}}\right)^K}{\mathcal{Q}(K)}}\right).
\end{equation}
\end{theorem}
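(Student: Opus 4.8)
The plan is to parallel the proof of Theorem~\ref{t1}. The starting point is the observation that, for large $K$, the throughput $\hat{\lambda}^{\text{Cap}}_{\text{out}}$ of \eqref{th} depends on the backoff parameters $\{q_i\}$ only through the steady-state point $p_A^{\text{Cap}}$, so that $\max_{\{q_i\}}\hat{\lambda}^{\text{Cap}}_{\text{out}}$ reduces to maximizing $\hat{\lambda}^{\text{Cap}}_{\text{out}}$ over the set of values $p_A^{\text{Cap}}$ can assume as $\{q_i\}$ ranges over all admissible (monotone non-increasing, $q_i\le 1$) sequences. I would introduce the substitution $p_A^{\text{Cap}}=e^{-\mu/\rho}\psi^{\mu/(1+\mu)}$, i.e. $\psi=e^{(1+\mu)/\rho}(p_A^{\text{Cap}})^{(1+\mu)/\mu}$, under which $-\ln p_A^{\text{Cap}}-\mu/\rho=-\tfrac{\mu}{1+\mu}\ln\psi$ and the binomial factors in \eqref{alfa}--\eqref{th} collapse to powers of $-\tfrac{1}{n}\ln\psi$ and $1+\tfrac{1}{n}\ln\psi$; this rewrites $\hat{\lambda}^{\text{Cap}}_{\text{out}}$ as a single function $T(\psi)$, whose numerator is $-e^{-\mu/\rho}\psi^{\mu/(1+\mu)}\ln\psi$ and whose two evaluations $T(\psi^{*,\text{Cap}})$ and $T(e^{-n})$ are exactly the two branches of \eqref{th_max}.

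First I would pin down the admissible range of $\psi$. Writing the large-$K$ form of \eqref{pA_app} as $-\ln p_A^{\text{Cap}}-\mu/\rho=\bigl(\tfrac{n\mu}{1+\mu}\bigr)\big/g$ with $g=\sum_{i=0}^{K-1}\tfrac{p_A^{\text{Cap}}(1-p_A^{\text{Cap}})^i}{q_i}+\tfrac{(1-p_A^{\text{Cap}})^K}{q_K}$, and using $q_i\le1$ together with $\sum_{i=0}^{K-1}p(1-p)^i+(1-p)^K=1$, one obtains $g\ge1$, with equality only when all $q_i=1$, while $g\to\infty$ as $q_0\to0$. Combined with the monotone, continuous dependence of the non-zero root of \eqref{pA_app} on $\{q_i\}$ (from the same fixed-point analysis as in Appendix~\ref{app0} and \cite{CSMA_Aloha}), this makes $p_A^{\text{Cap}}$ sweep the interval $\bigl[e^{-\mu/\rho}e^{-n\mu/(1+\mu)},\,e^{-\mu/\rho}\bigr)$, i.e. $\psi\in[e^{-n},1)$, with the left endpoint attained by the maximally aggressive choice $q_i=1$.

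Next comes the one-dimensional optimization of $T$ over $[e^{-n},1)$. There $T(\psi)>0$, $T(\psi)\to0$ as $\psi\to1^-$, and $T(e^{-n})>0$. Differentiating \eqref{th} after the change of variables and clearing the common positive denominator turns the stationarity condition $T'(\psi)=0$ into precisely \eqref{psi}; the only laborious step is the derivative of the binomial sum in \eqref{th}, in which only $-\tfrac{1}{n}\ln\psi$ and $1+\tfrac{1}{n}\ln\psi$ depend on $\psi$ and whose $i$-th summand differentiates, up to the common factor $\psi^{-1}$, to a multiple of $(-\tfrac{1}{n}\ln\psi)^{i-1}(1+\tfrac{1}{n}\ln\psi)^{n-i-1}(i+\ln\psi)$. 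The step I expect to be the main obstacle is proving that $T'$ has at most one zero in $[e^{-n},1)$, equivalently that \eqref{psi} has a unique root there: the binomial sum $\sum_{i=1}^n\binom{n}{i}(\cdot)^i(\cdot)^{n-i}$ makes a clean monotonicity/sign argument delicate, in contrast to the collision case where the analogous condition reduced to a Lambert-$W$ identity. Granting this, only two cases arise. If $T'(e^{-n})\ge0$, then, since $T>0$ on $[e^{-n},1)$ and $T\to0$ at the right end, $T'$ must vanish at some interior $\psi^{*,\text{Cap}}\in[e^{-n},1)$ which, being the unique zero, is a strict maximum; evaluating $T$ there yields the first branch of \eqref{th_max}. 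If $T'(e^{-n})<0$, then $T$ is strictly decreasing on $[e^{-n},1)$ and the maximum is the boundary value $T(e^{-n})$, attained at $q_i=1$, yielding the second branch.

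Finally I would identify the threshold and recover the optimal parameters. The switchover occurs at $T'(e^{-n})=0$, so $\mu_0$ is obtained by setting $\psi=e^{-n}$ in \eqref{psi} and solving for $\mu$: at $\psi=e^{-n}$ one has $1+\tfrac{1}{n}\ln\psi=0$, so the binomial sum in \eqref{psi} reduces to its $i=n-1$ and $i=n$ contributions, producing the coefficients $\bigl(1-e^{-\mu/\rho}/(1+\mu)^{n-1}\bigr)^n$ and $\bigl(1-e^{-\mu/\rho}/(1+\mu)^{n-2}\bigr)^{n-1}$ in \eqref{mu0}, and one checks that $T'(e^{-n})\ge0\Leftrightarrow\mu\ge\mu_0$. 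For $\mu\ge\mu_0$, inverting \eqref{pA_app} at the optimum --- from the rewriting above $g=-n/\ln\psi^{*,\text{Cap}}$, while $g=\tfrac{1}{q_0}\bigl(\sum_{i=0}^{K-1}\tfrac{p^*(1-p^*)^i}{\mathcal{Q}(i)}+\tfrac{(1-p^*)^K}{\mathcal{Q}(K)}\bigr)$ with $p^*=e^{-\mu/\rho}(\psi^{*,\text{Cap}})^{\mu/(1+\mu)}$ --- yields $\hat{q}_0^{\text{Cap}}$ as in \eqref{temp2}, hence $q_i^{*,\text{Cap}}=\hat{q}_0^{\text{Cap}}\mathcal{Q}(i)$; for $\mu<\mu_0$ the optimum is $\psi=e^{-n}$, i.e. $q_i^{*,\text{Cap}}=1$, in agreement with \eqref{qm_cap}.
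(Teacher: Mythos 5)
Your proposal follows essentially the same route as the paper's proof: the substitution $\psi^{\text{Cap}}=e^{(1+\mu)/\rho}(p_A^{\text{Cap}})^{(1+\mu)/\mu}$ mapping the feasible set onto $[e^{-n},1)$, the derivative whose numerator (up to a positive factor) is the left-hand side of \eqref{psi}, and the case split at $\mu_0$ determined by the sign of that expression at $\psi=e^{-n}$, deciding between the interior stationary point and the boundary point $\psi=e^{-n}$ (i.e., $q_i=1$). The uniqueness of the root of \eqref{psi} that you flag as the main obstacle is likewise asserted without proof in the paper, so your argument is at the same level of rigor, and is in fact more explicit about how \eqref{mu0} arises from the collapse of the binomial sum to its $i=n-1$ and $i=n$ terms at $\psi=e^{-n}$.
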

\begin{proof}
See Appendix \ref{app3}.
\end{proof}

%\begin{figure}[htbp]
%\centering
%\includegraphics[width=3.3in,height=2.4in]{th_mu.eps}
%\caption{Maximum network throughput $\hat{\lambda}_{\max}$. $\rho=10\text{dB}$ and $n=20$.} \label{th_mu}
%%\vspace{-0.8cm}
%\end{figure}
It is clear from \eqref{th_max_col} and \eqref{th_max} that both $\hat{\lambda}^{\text{Col}}_{\max}$ and $\hat{\lambda}^{\text{Cap}}_{\max}$  depend on the mini-slot length $a$, the failure-detection time $x$, the receiver threshold $\mu$ and the mean received SNR $\rho$.
%For the collision model, the maximum network throughput $\hat{\lambda}^{\text{Col}}_{\max}$ increases as $\mu$ decreases, and increases as $\rho$ increases. As $\rho\to\infty$, \eqref{th_max_col} reduces to Eq. (9) in \cite{CSMA_Aloha}.
For the collision model, \eqref{th_max_col} reduces to the maximum network throughput in perfect channel conditions, i.e., Eq. (9) in \cite{CSMA_Aloha}, as $\rho\to\infty$. For the capture model, it can be shown that
\begin{align}\label{mu_app}
\mu^{}_0\mathop{\approx }\limits^{{\rm with \;large}\; n} \frac{1}{n-1}
\end{align}
and
\begin{align}\label{thmax_low_app}
\hat{\lambda}^{\text{Cap},\mu\le\mu^{}_0}_{\max}\mathop{\approx }\limits^{{\rm with \;large}\; n}\frac{n}{1+a}\exp\left\{-\frac{\mu_{}}{\rho}\right\}\exp\left\{-\frac{n\mu_{}}{1+\mu_{}}\right\}.
\end{align}
%It is indicated in \eqref{thmax_low_app} that  the maximum network throughput with the capture model $\hat{\lambda}^{\text{Cap},\mu\le\mu^{}_0}_{\max}$
It can be seen from \eqref{thmax_low_app} that $\hat{\lambda}^{\text{Cap},\mu\le\mu^{}_0}_{\max}$ becomes insensitive to the failure-detection time $x$ when the number of nodes $n$ is large.
As shown in \eqref{mu_app}, with a large $n$, $\mu^{}_0\ll 1$. With such a small threshold, each packet has a high probability of being successfully decoded, and thus the probability
that a transmission failure occurs, i.e., all of concurrently-transmitted packets fail,  becomes close to zero.

\begin{figure}[htbp]
\centering
\includegraphics[width=3.3in,height=2.5in]{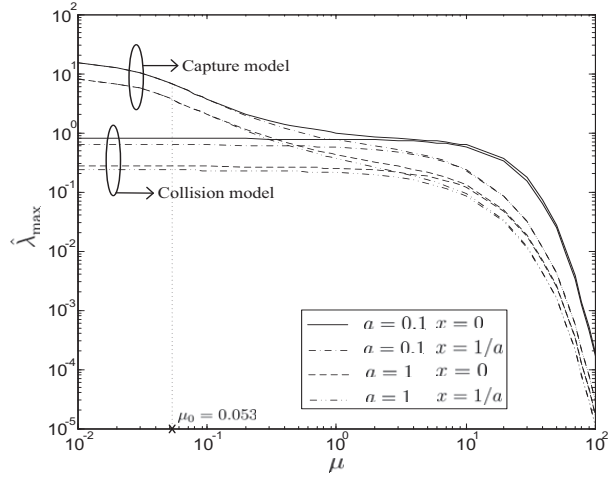}
\caption{Maximum network throughput $\hat{\lambda}_{\max}$ versus receiver threshold $\mu$. $\rho=10\text{dB}$ and $n=20$.}
%\vspace{-0.8cm}
\label{th_mu}
\end{figure}

Fig. \ref{th_mu} demonstrates how the maximum network throughput $\hat{\lambda}_{\max}$ varies with the receiver threshold $\mu$ for both the collision and capture models. Intuitively, fewer packets can be successfully decoded if the receiver threshold $\mu$ is enlarged. Therefore,
it can be seen from Fig. \ref{th_mu} that both $\hat{\lambda}^{\text{Col}}_{\max}$ and $\hat{\lambda}^{\text{Cap}}_{\max}$ decrease as $\mu$ increases, and the gain of $\hat{\lambda}^{\text{Cap}}_{\max}$ over $\hat{\lambda}^{\text{Col}}_{\max}$ disappears when $\mu$ is sufficiently large, in which case the capture model reduces to the collision model as only one packet can be successfully decoded each time. Moreover, it can be observed that both $\hat{\lambda}^{\text{Col}}_{\max}$ and $\hat{\lambda}^{\text{Cap}}_{\max}$ increase as the mini-slot length $a$ decreases. This is because with a smaller $a$, the channel contention can be distributed over time in a more refined manner, leading to lower chances of transmission failures. As the channel time wasted in transmission failures is reduced with a smaller failure-detection time $x$, $\hat{\lambda}^{\text{Col}}_{\max}$ and $\hat{\lambda}^{\text{Cap}}_{\max}$ can also be improved as $x$ decreases, as Fig. \ref{th_mu} illustrates.

\section{Case Study: IEEE 802.11 DCF Networks}

%Thank you very much for your comments. Let us first explain why we conduct event-driven simulations instead of using a more realistic simulation platform. Although simulation platforms such as ns-3 include more protocol details, in our case, the analysis is not based on any specific protocol. Instead, we focus on a fundamental issue of random access, that is, in what conditions carrier sensing can be beneficial. Therefore, we adopt a general model of CSMA, as shown in Section III, and conduct event-driven simulations to verify the analysis.
%
%With that being said, we do understand your concern with regard to the applicability of the analysis based on a general model of CSMA.
%To address the concern, we would like to
In Section \ref{II}, we have obtained an explicit expression of the network throughput, and demonstrated how to maximize the network throughput for a general CSMA network. The CSMA mechanism has been widely adopted in various types of practical networks, among which the IEEE 802.11 Distributed Coordination Function (DCF) network is a typical example. In this section, we will elaborate on how the above analysis can be applied to an IEEE 802.11 DCF network, and then validate the analysis using the ns-2 simulator by taking the protocol details of DCF into consideration.

%take the example of an IEEE 802.11 network with Distributed Coordination Function (DCF), which is a representative wireless network that adopts the CSMA mechanism, to show the analysis can indeed be applied to practical networks. We then validate the analysis using the ns-2 simulator by taking the protocol details of DCF into consideration.

%Before we show the details, we would like to first explain why we conduct simulations using the ns-2 simulator instead of the ns-3 simulator. The reason is two-fold: 1) the support for the latest IEEE 802.11 protocols, such as IEEE 802.11n and IEEE 802.11ac, is still under development in the ns-3 simulator [1]; 2) there have been numerous studies that verify their analysis of the performance of IEEE 802.11 DCF networks using the ns-2 simulator, which confirm that the ns-2 simulator can provide an accurate evaluation.
\begin{figure*}[ht]
    \centering
        \includegraphics[width=0.9\textwidth]{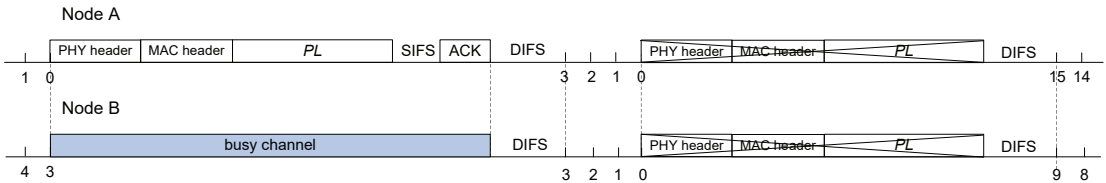}
\caption{Graphic illustration of the transmission behavior of nodes in IEEE 802.11 DCF networks with the basic access mechanism.}\label{figure1}
\end{figure*}

\subsection{Throughput Analysis}
Recall that for a general CSMA network, the key parameters include 1) the mini-slot length $a$, which is the ratio of the propagation delay required by each node for sensing the channel to the packet length, 2) the failure-detection time $x$, which is the time each node needs to know the failure of its transmitted packet and abort the ongoing transmission, and 3) the sequence of transmission probabilities $\{q_i\}_{i=0,\ldots,K}$, which means that each node has a transmission probability of $q_i$, ${i=0,\ldots,K}$, at each idle mini-slot after the $i$th transmission failure. In the following, we will demonstrate how to map these key parameters into those of an IEEE 802.11 DCF network.

Specifically, for each node, the state transition process of its head-of-line (HOL) packet has been established in Section III, where a HOL packet may stay in 1) the successful transmission state, i.e., State $T$, 2) the failure states, i.e., State $\textmd{F}_i$, $i=0,\ldots,K$, or 3) the waiting states, i.e., State $\textmd{R}_i$, $i= 0,\ldots,K$. Let $\tau_T$ and $\tau_F$ denote the holding time of a HOL packet in State $T$ and State $\textmd{F}_i$, respectively, in unit of mini slots. We then have
\begin{equation}
a=1/\tau_T\label{eq1}
\end{equation}
and
\begin{equation}
x=\tau_F.\label{eq2}
\end{equation}

\begin{table}[h]
\caption{Parameter Setting}
\centering
\begin{center}
\begin{tabular}{|c|c|}
\hline
packet payload ($PL$) & 2048B\\
PHY header & 20$\mu$s\\
MAC header & 36B\\
ACK & 14B+PHY header \\ %ç"?来分隔单元格的内å®?\\表示进入下一è¡?
\hline %画一个横线,下面的就都是一样了,这里一共有4行内å®?
Slot time ($\sigma$) & 9$\mu$s\\
SIFS & 16$\mu$s\\
DIFS & 34$\mu$s\\
Basic rate & 6Mb/s\\
Transmission rate & 65Mb/s\\
\hline
\end{tabular}\label{table1}
\end{center}
\end{table}

In IEEE 802.11 DCF networks, the holding time of a HOL packet in State $T$ and State $\textmd{F}_i$, $\tau_T$ and $\tau_F$, vary with different access mechanisms. Fig. \ref{figure1} demonstrates a simple example of the transmission behavior of nodes in IEEE 802.11 DCF networks with the basic access mechanism. According to Fig. \ref{figure1}, $\tau_T$ and $\tau_F$ (in unit of mini slots) can be written as
\begin{align}\label{tauT}
\tau_T^{}=\frac{(PL+\text{MAC header})/r^{}}{\sigma}+\frac{\text{PHY header}{+}\text{ACK}/\text{Basic rate}{+}\text{DIFS}{+}\text{SIFS}}{\sigma}
\end{align}
and
\begin{align}\label{tauF}
\tau_F^{}=\frac{(PL+\text{MAC header})/r^{}}{\sigma}+\frac{\text{PHY header}{+}\text{DIFS}}{\sigma},
\end{align}
respectively, where SIFS and DIFS are abbreviations for Short Interframe Space and DCF Interframe Space, respectively, and $r$ is the transmission rate of each node. With the system parameters adopted in the IEEE 802.11n standard \cite{IEEE_Std2009}, which are provided in Table \ref{table1}, for example, we can obtain from (\ref{tauT}-\ref{tauF}) that $\tau_T= 40.44$ mini slots and $\tau_F= 34.36$ mini slots. As a result, we have in this case $a=1/\tau_T=0.0247$ and $x=\tau_F=34.36$.

The holding time of a HOL packet in State $\textmd{R}_{i}$, $i=0,\dots,K$, on the other hand, is determined by the backoff protocol. In IEEE 802.11 DCF networks, when a HOL packet enters State $\textmd{R}_{i}$, it randomly selects a value from $\{0,\dots,W_{i}-1\}$, where $W^{}_{i}=W^{}_{}\cdot 2^{i}$ is the backoff window size at State $\textmd{R}_{i}$, $i=0,\dots,K$, and $W$ is the initial backoff window size of each node. The HOL packet then counts down at each idle time slot. It leaves State $\textmd{R}_{i}$ and makes a transmission request when the channel is idle and the counter is zero. The mean holding time at State $\textmd{R}_{i}$, $i=0,\dots,K$, can then be obtained as \cite{Unified}
\begin{equation}\label{tau_W}
\tau^{}_{{R}_{i}}=\frac{1}{\alpha}\cdot \frac{1+W^{}_i}{2},
\end{equation}
where $\alpha$ is the probability of sensing the channel idle. As a node with a State-$\textmd{R}_{i}$ HOL packet would access the channel with the transmission probability $q^{}_i$ when it senses the channel idle, the mean holding time $\tau _{R_{i} }$ (in unit of mini slots) can be obtained according to Appendix B in \cite{CSMA_Aloha} as
\begin{equation} \label{tau_R}
\tau^{} _{R_{i} }=\frac{1}{\alpha q^{}_i}.
\end{equation}
By combining (\ref{tau_W}) and (\ref{tau_R}), the transmission probability of a State-$\textmd{R}_{i}$ HOL packet when it senses the channel idle can be written as
\begin{equation}\label{trans}
q^{}_i=\frac{2}{1+W^{}_i},
\end{equation}
$i=0,\dots,K$. (\ref{trans}) indicates that with a larger backoff window size, each node has a smaller chance to access the channel.

The network steady-state points of a saturated CSMA network have been derived as \eqref{Eq7} and \eqref{pA_app} for the collision and capture models, respectively. By combining (\ref{trans}) with \eqref{Eq7}, the network steady-state point of a saturated IEEE 802.11 DCF network with the collision model can be obtained as the single non-zero root of
\begin{align}
p^{\text{Col}}&=\exp\left\{-\frac{\mu}{\rho}\right\}\cdot\exp \Bigg\{-\frac{2n^{}}{1+\sum
_{i=0}^{K-1}{p^{\text{Col}}(1{-}p^{\text{Col}})^{i}}{W^{}_i} {+}{(1{-}p^{\text{Col}})^{K}}{W^{}_K}}\Bigg\}.\label{pAW1}
\end{align}
With the capture model, on the other hand, the network steady-state point of a saturated IEEE 802.11 DCF network can be obtained by combining (\ref{trans}) with \eqref{pA_app} as the single non-zero root of
\begin{align}
p^{\text{Cap}}&=\exp\left\{-\frac{\mu}{\rho}\right\}\cdot\exp \Bigg\{-\frac{\frac{2\mu}{1+\mu}\cdot n}{1+\sum
_{i=0}^{K-1}{p^{\text{Cap}}(1{-}p^{\text{Cap}})^{i}}{W^{}_i} {+}{(1{-}p^{\text{Cap}})^{K}}{W^{}_K}}\Bigg\}.\label{pAW2}
\end{align}
The network throughput with the collision model $\hat{\lambda}^{\text{Col}}_{\text{out}}$ and the network throughput with the capture model $\hat{\lambda}^{\text{Cap}}_{\text{out}}$ can then be obtained by substituting (\ref{tauT}-\ref{tauF}) and (\ref{pAW1}) into \eqref{Eq11}, and substituting (\ref{tauT}-\ref{tauF}) and (\ref{pAW2}) into \eqref{th}, respectively.

Moreover, the maximum network throughputs with the collision and capture models have been shown in Theorem 1 and Theorem 2, respectively. To achieve the maximum network throughput, the initial transmission probability $q_0$ needs to be carefully tuned. For an IEEE 802.11 DCF network, the maximum network throughputs with the collision and capture models can be derived by substituting (\ref{tauT}-\ref{tauF}) into \eqref{th_max_col} and substituting (\ref{tauT}-\ref{tauF}) into \eqref{th_max}, respectively. To achieve the maximum network throughput, the initial backoff window sizes should be properly tuned. With the collision model, the optimal backoff window sizes can be obtained by combining (\ref{trans}) and (\ref{temp}-\ref{qm_col}) as
\begin{equation}
W^{*,{\text{Col}}}_i=\hat{W}^{\text{Col}}_0 2^i,\notag
\end{equation}
$i=0,\ldots,K$, where
\begin{equation}\label{temp3}
\hat{W}^{\text{Col}}_0=\frac{-2n}{\ln \psi^{*,{\text{Col}}}}\left( \frac{e^{-\frac{\mu}{\rho}}\psi^{*,{\text{Col}}}}{2e^{-\frac{\mu}{\rho}}\psi^{*,{\text{Col}}}-1}+\frac{e^{-\frac{\mu}{\rho}}\psi^{*,{\text{Col}}}-1}{2e^{-\frac{\mu}{\rho}}\psi^{*,{\text{Col}}}-1}(2(1-e^{-\frac{\mu}{\rho}}\psi^{*,{\text{Col}}}))^K\right).
\end{equation}
And the optimal backoff window sizes with the capture model can be obtained by combining (\ref{trans}) and (\ref{qm_cap}-\ref{temp2})  in the paper as
\begin{equation}
W^{*,\text{Cap}}_i=\left\{\!\!\!
\begin{array}{ll}
\hat{W}^{\text{Cap}}_0 2^i\,\,\,\,\,\,\,\,\,\,\,\,\,\text{if}\,\,\,\,\,\, \mu\ge \mu^{}_0\\
1 \,\,\,\,\,\,\,\,\,\,\,\,\,\,\,\,\,\,\,\,\,\,\,\,\,\,\,\,\,\,\,\text{otherwise,}\,\,\,\,\,\,
\end{array}\right.\notag
\end{equation}
$i=0,\ldots,K$, where
\begin{equation}\label{temp4}
\hat{W}^{\text{Cap}}_0=\frac{-2n}{\ln \psi^{*,{\text{Cap}}}}\left( \frac{e^{-\frac{\mu}{\rho}}(\psi^{*,{\text{Cap}}})^{\frac{\mu}{1+\mu}}}{2e^{-\frac{\mu}{\rho}}(\psi^{*,{\text{Cap}}})^{\frac{\mu}{1+\mu}}-1}+\frac{e^{-\frac{\mu}{\rho}}(\psi^{*,{\text{Cap}}})^{\frac{\mu}{1+\mu}}-1}{2e^{-\frac{\mu}{\rho}}(\psi^{*,{\text{Cap}}})^{\frac{\mu}{1+\mu}}-1}(2(1-e^{-\frac{\mu}{\rho}}(\psi^{*,{\text{Cap}}})^{\frac{\mu}{1+\mu}}))^K\right).
\end{equation}

\subsection{Simulation Results and Discussions}
In the following, we will validate the above results by using the ns-2 simulator. Here the simulation is based on the dei80211mr library. The dei80211mr library provides enhanced functionality such as the capture model based on the 802.11 implementation included in ns release 2.29 \cite{dei80211mr}. The source code of the simulations can be found in Appendix \ref{app4}.

%\begin{figure*}[ht]
%    \centering
%    \subfigure[]{
%    \begin{minipage}{0.3\textwidth}
%    \centering
%        \includegraphics[width=1\textwidth]{figure21.eps}\label{figure21}
%    \end{minipage}
%    }
%    \subfigure[]{
%    \begin{minipage}{0.3\textwidth}
%    \centering
%        \includegraphics[width=1\textwidth]{figure22.eps}\label{figure22}
%    \end{minipage}
%    }
%    \subfigure[]{
%    \begin{minipage}{0.3\textwidth}
%    \centering
%        \includegraphics[width=1\textwidth]{figure23.eps}\label{figure23}
%    \end{minipage}
%    }
%\caption{Network throughput $\hat{\lambda}^{}_{\text{out}}$ versus initial backoff window size $W$. $n=20$, $K=6$, $a=0.0247$, $x=34.36$ ($\tau_T= 40.44$ mini slots and $\tau_F= 34.36$ mini slots), and $W_i=W \cdot 2^i$. $\rho=10\text{dB}$, $\mu=10$. (a) Collision model. $\rho=10\text{dB}$, $\mu=10$. (b) Capture model. $\rho=10\text{dB}$, $\mu=10$. (c) Capture model. $\rho=-10\text{dB}$, $\mu=0.1$.}\label{figure2}
%\end{figure*}

\begin{figure*}
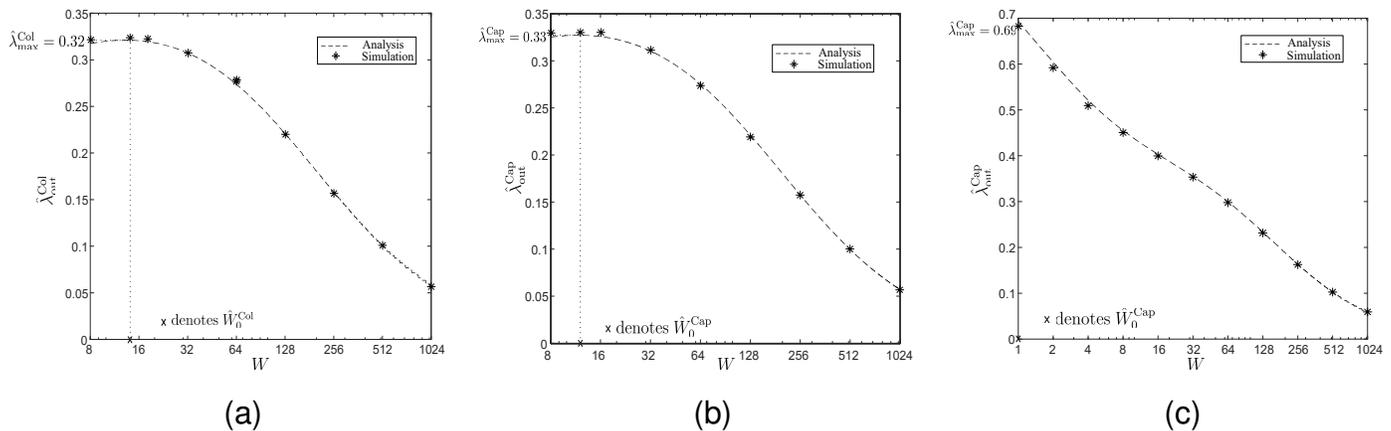

\centerline{\subfloat[]{\includegraphics[width=2.4in,height=1.95in]{figure21.eps}
\label{figure21}}\hfil
\subfloat[]{\includegraphics[width=2.4in,height=1.95in]{figure22.eps}
\label{figure22}}\hfil
\subfloat[]{\includegraphics[width=2.4in,height=1.95in]{figure23.eps}
\label{figure23}}} \caption{Network throughput $\hat{\lambda}^{}_{\text{out}}$ versus initial backoff window size $W$. $n=20$, $K=6$, $a=0.0247$, $x=34.36$ ($\tau_T= 40.44$ mini slots and $\tau_F= 34.36$ mini slots), and $W_i=W \cdot 2^i$. $\rho=10\text{dB}$, $\mu=10$. (a) Collision model. $\rho=10\text{dB}$, $\mu=10$. (b) Capture model. $\rho=10\text{dB}$, $\mu=10$. (c) Capture model. $\rho=-10\text{dB}$, $\mu=0.1$. }\label{figure2}
\end{figure*}

Fig. \ref{figure2} presents both the analytical and simulation results of the network throughput $\hat{\lambda}^{}_{\text{out}}$ versus the initial backoff window size $W$ for the collision and capture models. Note that in Fig. \ref{figure2}, %Fig. \ref{figure2} is corresponding to Fig. 4 in the paper, but with one different set of
$a$ and $x$ are obtained according to the values of system parameters in Table \ref{table1}, i.e., $a=0.0247$ and $x=34.36$. It can be seen from Fig. \ref{figure2} that the simulation results agree with the analysis well. For an IEEE 802.11 DCF network, both $\hat{\lambda}^{\text{Col}}_{\text{out}}$ and $\hat{\lambda}^{\text{Cap}}_{\text{out}}$ are sensitive to the initial backoff window size $W$. The corresponding optimal initial backoff window sizes have been derived as (\ref{temp3}) and (\ref{temp4}) for the collision model and the capture model, respectively, and are verified by simulation results presented in Fig. \ref{figure2}.

\appendices

\section{Derivation of \eqref{Eq7} and \eqref{pA_app}}\label{app0}

%\begin{align} \label{Eq1}
%&\omega{=}\Pr \{ {\rm one \; HOL \; packet  \;  is \; in \; State\;} \text{R}_i,i=0,\ldots,K, \; \notag\\
%&{\rm and  \; attempts \; channel\; access}|{\rm idle  \; channel}\}=
%\sum _{i=0}^{K}\tilde{\pi }_{R_{i} } q_{i}.
%\end{align}
Based on the collision model, one packet can be successfully received if and only if there are no concurrent transmissions and its received SNR is above the receiver threshold $\mu$. We have
%\begin{align}\label{ADD1}
$p^{\text{Col}}=\Pr\{\text{no concurrent packet transmissions}\}\cdot\Pr\{\text{received SNR is above the threshold } \mu\}$.
%\end{align}
According to the state transition process of each HOL packet shown in Fig. \ref{Chain}, the steady-state probability that one node attempts to access the channel given that the channel is sensed idle is given by $\sum _{i=0}^{K}\tilde{\pi }_{R_{i} } q_{i}$. Therefore, the probability that there are no concurrent transmissions is given by
%\begin{align}\label{ADD2}
$\Pr\{\text{no concurrent packet transmissions}\}=\left(1-\sum _{i=0}^{K}\tilde{\pi }_{R_{i} } q_{i}\right)^{n-1}$.
%\end{align}
As the received SNR is exponentially distributed with mean $\rho$, the probability that the received SNR is above the receiver threshold $\mu$ can be written as
%\begin{align}\label{ADD3}
$\Pr\{\text{received SNR is above the threshold } \mu\}=\exp\left\{-\frac{\mu}{\rho}\right\}$.
%\end{align}
As a result, we have
\begin{align} \label{Eq2}
p^{\text{Col}}&=\exp\left\{-\frac{\mu}{\rho}\right\}\cdot\left(1-\sum _{i=0}^{K}\tilde{\pi }_{R_{i} } q_{i}\right)^{n-1}
\mathop{\approx }\limits^{{\rm with \;large}\; n} \exp\left\{-\frac{\mu}{\rho}-n\sum _{i=0}^{K}\tilde{\pi }_{R_{i} } q_{i}
\right\}\notag\\
&{=}\exp\left\{-\frac{\mu}{\rho}\right\}\cdot \exp \left\{-\frac{a}{\alpha^{\text{Col}}
 p^{\text{Col}}} \cdot n\tilde{\pi }_{T }\right\},
\end{align}
where the probability of sensing the channel idle $\alpha^{\text{Col}}$ can be obtained as \eqref{Eq9}
% is given by
%\begin{align} \label{}
%\alpha^{\text{Col}}=\frac{a}{(x+1)a-(1-ax)p^{\text{Col}}\left(\frac{\mu}{ \rho}+\ln p^{\text{Col}}\right)-xa\exp\left\{\frac{\mu}{ \rho}\right\}p^{\text{Col}}},
%\end{align}
by following a similar derivation to that in Appendix C of \cite{CSMA_Aloha}.
%where the probability of sensing the channel idle $\alpha^{\text{Col}}$ is given by \eqref{Eq9}.
%By combining \eqref{Eq1} and \eqref{Eq2}, it can be obtained that
%\begin{align} \label{Eq5}
%p&{=}\exp\left\{{-}\frac{\mu}{\rho}\right\}{\cdot}\exp \left\{{-}n\sum _{i=0}^{K}\tilde{\pi }_{R_{i} } q_{i}
%\right\}\notag\\
%&{=}\exp\left\{-\frac{\mu}{\rho}\right\}\cdot \exp \left\{-\frac{a}{\alpha
% p} \cdot n\tilde{\pi }_{T }\right\},
%\end{align}
%where the probability of sensing the channel idle $\alpha^{\text{Col}}$ can be derived as
%%\begin{align} \label{Eq9}
%%\alpha=\frac{a}{(x+1)a-(1-ax)p_A\ln \psi-xa\psi}.
%%\end{align}
%\begin{align} \label{Eq9}
%\alpha^{\text{Col}}=\frac{a}{(x+1)a-(1-ax)p^{\text{Col}}\left(\frac{\mu}{ \rho}+\ln p^{\text{Col}}\right)-xa\exp\left\{\frac{\mu}{ \rho}\right\}p^{\text{Col}}},
%\end{align}
%by following a similar approach as that in \cite{CSMA_Aloha}.
\eqref{Eq7} can then be obtained by substituting \eqref{sr} into \eqref{Eq2}, which has one single non-zero root $p^{\text{Col}}_A$ if $\{q_i\}_{i=0,\ldots,K}$ is a monotonic non-increasing sequence.

%Substituting \eqref{sr} into \eqref{Eq2} yields
%\begin{align} \label{}
%p^{\text{Col}}{=}\exp\left\{{-}\frac{\mu}{\rho}\right\}{\cdot}\exp \left\{{-}\frac{ n}{\frac{\alpha^{\text{Col}}}{a}{+}\sum
%_{i=0}^{K{-}1}\frac{p^{\text{Col}}(1{-}p^{\text{Col}})^{i}}{q_i}{+}\frac{(1{-}p^{\text{Col}})^K}{q_K}} \right\}.
%\end{align}
%With a large cutoff phase $K$, the first term $\frac{\alpha^{\text{Col}}}{a}$ in the denominator is much smaller that the second term, and can be ignored. \eqref{Eq7} can then be obtained.
%%We then have
%%\begin{align} \label{Eq7}
%%p\approx  \exp\left\{-\frac{\mu}{\rho}\right\}\cdot \exp \left\{-\frac{ n}{\sum
%%_{i=0}^{K{-}1}\frac{p(1{-}p)^{i}}{q_i}{+}\frac{(1{-}p)^K}{q_K}} \right\}.
%%\end{align}

%It is indicated in \eqref{Eq7} that the network steady-state point $p^{\text{Col}}_A$ is determined by the network size $n$, the receiver threshold $\mu$, the mean received SNR $\rho$ and the sequence of transmission probabilities $\{q_i\}_{i=0,\ldots,K}$.
%As $\rho\to\infty$, \eqref{Eq7} reduces to Eq. (51) in \cite{CSMA_Aloha}. In this case,  one packet can be successfully received as long as there are no concurrent transmissions.

With the capture model, on the other hand, one packet can be successfully decoded as long as its received SINR is above the receiver threshold $\mu$. Specifically, for each transmitted packet from node $i$,
it can be successfully received if $\frac{|h_i|^2}{\sum_{j\in S_i}|h_j|^2+1/\rho}>\mu$
%its received SINR value $\mu_i$ can be written as
%\begin{align} \label{}
%\mu_i=\frac{|h_i|^2}{\sum_{j\in C_i}|h_j|^2+1/\rho},
%\end{align}
%according to \eqref{pri} and \eqref{pti},
where $S_i$ is the set of nodes that transmit concurrently with node $i$.
%For each packet from node $i$, the transmission is successful if and only if its SINR value $\mu_i$ is larger than certain receiver threshold $\mu$.
With $|h_i|^2\sim\exp(1)$, the probability of successful transmission of one packet given $n_c$ other concurrent transmissions has been derived in \cite{Dua} as
%\begin{align} \label{sm}
%S(n_c)=\exp\left\{-\frac{\mu}{\rho}\right\}\cdot\frac{1}{\left(1+\mu\right)^{n_c}}.
%\end{align}
$\exp\left\{-\frac{\mu}{\rho}\right\}\cdot\frac{1}{\left(1+\mu\right)^{n_c}}$.
The steady-state probability of successful transmission of HOL packets given that the channel is idle with the capture model, $p^{\text{Cap}}$, can then be obtained as
\begin{align} \label{ADD4}
p^{\text{Cap}}&{=}\sum_{n_c=0}^{n-1}
\exp\left\{-\frac{\mu}{\rho}\right\}\cdot\frac{1}{\left(1+\mu\right)^{n_c}}{\cdot}
\Pr\{n_c\text{ concurrent transmissions}\}.
\end{align}
As the probability that one node attempts to access the channel given that the channel is sensed idle is given by $\sum _{i=0}^{K}\tilde{\pi }_{R_{i} } q_{i}$, we have
\begin{align} \label{ADD5}
\Pr\{n_c\text{ concurrent transmissions}\}=\binom{n-1}{n_c}\left(\sum _{i=0}^{K}\tilde{\pi }_{R_{i} } q_{i}\right)^{n_c}\left(1-\sum _{i=0}^{K}\tilde{\pi }_{R_{i} } q_{i}\right)^{n-1-n_c}.
\end{align}
By combining \eqref{ls_def}, \eqref{ADD5} and \eqref{ADD4}, we have
%\begin{align} \label{p_app_pre}
%p^{\text{Cap}}&{=}\sum_{n_c=0}^{n-1}
%\exp\left\{-\frac{\mu}{\rho}\right\}\cdot\frac{1}{\left(1+\mu\right)^{n_c}}{\cdot}\binom{n-1}{n_c}\left(\sum _{i=0}^{K}\tilde{\pi }_{R_{i} } q_{i}\right)^{n_c}\left(1-\sum _{i=0}^{K}\tilde{\pi }_{R_{i} } q_{i}\right)^{n-1-n_c}\notag\\
%&=\exp\left\{-\frac{\mu}{\rho}\right\}\cdot\bigg(1-\frac{\mu}{1+\mu}\sum _{i=0}^{K}\tilde{\pi }_{R_{i} } q_{i}\bigg)^{n-1}\notag\\
%&\mathop{\approx }\limits^{{\rm with \;large}\; n} \exp\left\{{-}\frac{\mu}{\rho}\right\}{\cdot}\exp \left\{{-}\frac{n\mu}{1+\mu}\sum _{i=0}^{K}\tilde{\pi }_{R_{i} } q_{i}
%\right\}\notag\\
%&{=}\exp\left\{-\frac{\mu}{\rho}\right\}\cdot \exp \left\{-\frac{\mu}{1+\mu}\cdot\frac{a}{\alpha^{\text{Cap}}
% p^{\text{Cap}}} \cdot n\tilde{\pi }_{T }\right\},
%\end{align}
\begin{align} \label{p_app_pre}
&p^{\text{Cap}}{=}\sum_{n_c=0}^{n-1}
\exp\left\{-\frac{\mu}{\rho}\right\}\cdot\frac{1}{\left(1+\mu\right)^{n_c}}{\cdot}\binom{n-1}{n_c}\left(\sum _{i=0}^{K}\tilde{\pi }_{R_{i} } q_{i}\right)^{n_c}\left(1-\sum _{i=0}^{K}\tilde{\pi }_{R_{i} } q_{i}\right)^{n-1-n_c}\mathop{\approx }\limits^{{\rm with \;large}\; n} \notag\\
& \exp\left\{{-}\frac{\mu}{\rho}\right\}{\cdot}\exp \left\{{-}\frac{n\mu}{1+\mu}\sum _{i=0}^{K}\tilde{\pi }_{R_{i} } q_{i}
\right\}{=}\exp\left\{{-}\frac{\mu}{\rho}\right\}{\cdot} \exp \left\{{-}\frac{\mu}{1{+}\mu}{\cdot}\frac{a}{\alpha^{\text{Cap}}
 p^{\text{Cap}}} {\cdot} n\tilde{\pi }_{T }\right\},
\end{align}
where the probability of sensing the channel idle $\alpha^{\text{Cap}}$ can be obtained as \eqref{alfa}
%\begin{align} \label{}
%&\alpha^{\text{Cap}}={a}/\Bigg(1{+}a{-}\exp\left\{\frac{1{+}\mu}{\rho}\right\}(p^{\text{Cap}})^{\frac{1+\mu}{\mu}}{-}(1{-}ax)
%\notag\\
%&\sum_{i=1}^n\left(1{-}\frac{\exp\{{-}\frac{\mu}{\rho}\}}{(1{+}\mu)^{i-1}}\right)^i\binom{n}{i}\left(\left({-}\ln p^{\text{Cap}}{-}\frac{\mu}{\rho}\right){\cdot} \frac{1{+}\mu}{n\mu}\right)^i\left(1{-}\left({-}\ln p^{\text{Cap}}{-}\frac{\mu}{\rho}\right){\cdot} \frac{1{+}\mu}{n\mu}\right)^{n-i}\Bigg),
%\end{align}
by following a similar derivation to that in Appendix C of \cite{CSMA_Aloha}. \eqref{pA_app} can then be obtained by substituting \eqref{sr} into \eqref{p_app_pre}, which has one single non-zero root $p^{\text{Cap}}_A$ if $\{q_i\}_{i=0,\ldots,K}$ is a monotonic non-increasing sequence.

\section{Proof of Theorem \ref{t1}}\label{app1}
\begin{proof}
Let
%\begin{align} \label{Eq10}
$\psi^{\text{Col}}=\exp\left\{\frac{\mu}{ \rho}\right\}\cdot p^{\text{Col}}_A$.
%\end{align}
According to \eqref{Eq11}, we have
\begin{equation}\label{throu_appendix}
\hat{\lambda}^{\text{Col}}_{\text{out}}=
\frac{\frac{1}{ax}}{\exp\{\frac{\mu}{\rho}\}\cdot\frac{1+\frac{1}{x}-\psi^{\text{Col}}}{-\psi\ln\psi^{\text{Col}}}+\frac{1}{ax}-1}.
\end{equation}

According to \eqref{Eq7}, we have $p^{\text{Col}}_A<\exp\left\{-\frac{\mu}{\rho}\right\}$. Moreover, as $q_i\le 1$, ${i=0,\ldots,K}$, we have $p^{\text{Col}}_A\ge\exp\left\{-\frac{\mu}{\rho}-n\right\}$. As  a result, we have $\psi^{\text{Col}}\in[\exp\{-n\},1)$. According to \eqref{throu_appendix}, to maximize $\hat{\lambda}^{\text{Col}}_{\text{out}}$, we need to minimize $\frac{1+\frac{1}{x}-\psi^{\text{Col}}}{-\psi\ln\psi^{\text{Col}}}$ under the constraint that $\psi^{\text{Col}}\in[\exp\{-n\},1)$. It can be easily shown that $\frac{1+\frac{1}{x}-\psi^{\text{Col}}}{-\psi^{\text{Col}}\ln\psi^{\text{Col}}}$ is minimized when $\psi^{\text{Col}}=\psi^{*,{\text{Col}}}={-(1+1/x){\it {\mathbb W}}_{0}\left(-\tfrac{1}{e(1+1/x)}\right)}\in[\exp\{-n\},1)$. Therefore, the maximum throughput can be obtained by substituting $\psi^{\text{Col}}=\psi^{*,{\text{Col}}}$ into \eqref{throu_appendix}, and  the optimal transmission probability $q_i^{*,{\text{Col}}}$ can be obtained by combining $\psi^{\text{Col}}=\psi^{*,{\text{Col}}}$ and \eqref{Eq7}.
%\begin{equation}\label{}
%\psi=\psi^*={-(1+1/x){\it {\mathbb W}}_{0}\left(-\tfrac{1}{e(1+1/x)}\right)}.
%\end{equation}
\end{proof}

\section{Proof of Theorem \ref{t3}}\label{app3}
\begin{proof}
Let
%\begin{align} \label{xpA}
$\psi^{\text{Cap}}=\exp\left\{\frac{1+\mu}{ \rho}\right\}\cdot (p^{\text{Cap}}_A)^{\frac{1+\mu}{\mu}}$.
%\end{align}
According to \eqref{th}, we have
\begin{align}\label{}
\hat{\lambda}^{\text{Cap}}_{\text{out}}{=}\frac{-\exp\left\{-\frac{\mu}{\rho}\right\}(\psi^{\text{Cap}})^{\frac{\mu}{1+\mu}}\ln \psi^{\text{Cap}}}{1{+}a{-}\psi^{\text{Cap}}{-}(1{-}ax)\sum_{i=1}^n\left(1{-}(\frac{1}{1{+}\mu})^{i-1}\exp\{{-}\frac{\mu}{\rho}\}\right)^i\binom{n}{i}\left({-}\frac{\ln\psi^{\text{Cap}}}{n}\right)^i\left(1{+}\frac{\ln\psi^{\text{Cap}}}{n}\right)^{n-i}}.
\end{align}
According to \eqref{pA_app}, we have $p^{\text{Cap}}_A\in\bigg[\exp\left\{-\frac{\mu}{\rho}\right\}\cdot \exp \left\{-\frac{n\mu}{1+\mu}\right\},\exp\left\{-\frac{\mu}{\rho}\right\}\bigg)$, which leads to $\psi^{\text{Cap}}\in[\exp\{-n\},1)$.
Therefore, the maximum throughput $\hat{\lambda}^{}_{\max}$ is given by
\begin{align}\label{}
&\hat{\lambda}^{\text{Cap}}_{\max}=\max_{\exp\{-n\}\le\psi^{\text{Cap}}<1}\hat{\lambda}^{\text{Cap}}_{\text{out}}
\notag\\
&{=}\max_{\exp\{-n\}\le\psi^{\text{Cap}}<1}\frac{-\exp\left\{-\frac{\mu}{\rho}\right\}(\psi^{\text{Cap}})^{\frac{\mu}{1+\mu}}\ln \psi^{\text{Cap}}}{1{+}a{-}\psi^{\text{Cap}}{-}(1{-}ax)\sum_{i=1}^n\left(1{-}(\frac{1}{1{+}\mu})^{i-1}\exp\{{-}\frac{\mu}{\rho}\}\right)^i\binom{n}{i}\left({-}\frac{\ln\psi^{\text{Cap}}}{n}\right)^i\left(1{+}\frac{\ln\psi^{\text{Cap}}}{n}\right)^{n-i}}.\notag
%&\text{s.t.} \;\;\psi^{\text{Cap}}\in[\exp\{-n\},1),
\end{align}
%According to \eqref{pA_app}, we have $\psi\in[)$

The first-order derivative of $\hat{\lambda}^{\text{Cap}}_{\text{out}}$ with respect to $\psi^{\text{Cap}}$ can be written as
%\begin{align}\label{}
%\frac{d\hat{\lambda}^{}_{\text{out}}}{d\psi}=\frac{-\exp\left\{-\frac{\mu}{\rho}\right\}\psi^{\frac{\mu}{1+\mu}}\ln \psi}{1+a-\psi-(1-ax)\sum_{i=1}^n\left(1-(\frac{1}{1+\mu})^{i-1}\exp\{-\frac{\mu}{\rho}\}\right)^i\binom{n}{i}\left(-\frac{\ln\psi}{n}\right)^i\left(1+\frac{\ln\psi}{n}\right)^{n-i}}.
%\end{align}
\begin{align}\label{fir}
\frac{d\hat{\lambda}^{\text{Cap}}_{\text{out}}}{d\psi^{\text{Cap}}}=\frac{{-}\exp\left\{{-}\frac{\mu}{\rho}\right\}\cdot{(\psi^{\text{Cap}})^{{{-}\frac{1}{1+\mu}}}f(\psi^{\text{Cap}})}}{\left({1{+}a{-}\psi^{\text{Cap}}{-}(1{-}ax)\sum_{i=1}^n\left(1{-}\frac{\exp\{{-}\frac{\mu}{\rho}\}}{(1{+}\mu)^{i-1}}\right)^i\binom{n}{i}\left({-}\frac{\ln\psi^{\text{Cap}}}{n}\right)^i\left(1{+}\frac{\ln\psi^{\text{Cap}}}{n}\right)^{n-i}}\right)^2},
\end{align}
where \begin{align}
&f(\psi^{\text{Cap}})=\left(\frac{\mu}{1+\mu}\ln \psi^{\text{Cap}}+1\right)\cdot\left(1+a-\psi^{\text{Cap}}\right)+\psi^{\text{Cap}}\ln \psi^{\text{Cap}}-(1-ax)\sum_{i=1}^n\left(1-\frac{\exp\left\{-\frac{\mu}{\rho}\right\}}{(1+\mu)^{i-1}}\right)^i\notag\\
&\binom{n}{i}\left({-}\frac{\ln\psi^{\text{Cap}}}{n}\right)^i\left(1{+}\frac{\ln\psi^{\text{Cap}}}{n}\right)^{n-i-1}\left( \left(\frac{\mu}{1{+}\mu}\ln \psi^{\text{Cap}}{+}1\right)\left(1{+}\frac{\ln\psi^{\text{Cap}}}{n}\right){-}i{-}\ln\psi^{\text{Cap}}\right).
\end{align}

It can be easily obtained from \eqref{fir} that
\begin{align}\label{}
\frac{d\hat{\lambda}^{\text{Cap}}_{\text{out}}}{d\psi^{\text{Cap}}}\Big|_{\psi^{\text{Cap}}=1}={{-}\frac{1}{a}\exp\left\{{-}\frac{\mu}{\rho}\right\}}<0,
\end{align}
and
\begin{align}\label{}
\frac{d\hat{\lambda}^{\text{Cap}}_{\text{out}}}{d\psi^{\text{Cap}}}\Big|_{\psi^{\text{Cap}}=\exp\{-n\}}=\frac{{-}\exp\left\{{-}\frac{\mu}{\rho}+\frac{n}{1+\mu}\right\}f(\exp\{-n\})}{\left(1+a-\exp\{-n\}-(1-ax)\left(1-\frac{\exp\left\{-\frac{\mu_{}}{\rho}\right\}}{(1+\mu)^{n-1}}   \right)^n\right)^2}.
\end{align}
Let $\mu_0$ denote the root of $f(\exp\{-n\})=0$. When $\mu<\mu_0$, we have $\frac{d\hat{\lambda}^{\text{Cap}}_{\text{out}}}{d\psi^{\text{Cap}}}<0$ for $\psi^{\text{Cap}}\in[\exp\{-n\},1)$. Therefore, the maximum throughput $\hat{\lambda}^{\text{Cap}}_{\max}$ is achieved when $\psi^{\text{Cap}}=\exp\{-n\}$. The optimal transmission probability $q_i^{*,{\text{Cap}}}$ can be obtained by combining $\psi^{\text{Cap}}=\exp\{-n\}$ and \eqref{pA_app}. On the other hand, when $\mu>\mu_0$, we have $\frac{d\hat{\lambda}^{\text{Cap}}_{\text{out}}}{d\psi^{\text{Cap}}}>0$ for $\psi^{\text{Cap}}\in[\exp\{-n\},\psi^{*,{\text{Cap}}})$ and $\frac{d\hat{\lambda}^{\text{Cap}}_{\text{out}}}{d\psi}<0$ for $\psi^{\text{Cap}}\in(\psi^{*,{\text{Cap}}},1)$, where $\psi^{*,{\text{Cap}}}$ is the root of $f(\psi^{\text{Cap}})=0$. In this case, the maximum throughput $\hat{\lambda}^{\text{Cap}}_{\max}$ is achieved when $\psi=\psi^{*,{\text{Cap}}}$.  The optimal transmission probability $q_i^{*,{\text{Cap}}}$ can be obtained by combining $\psi^{\text{Cap}}=\psi^{*,{\text{Cap}}}$ and \eqref{pA_app}.
\end{proof}

\section{Source Codes of ns-2 Simulations}\label{app4}
\begin{lstlisting}
# ==============================================================
# Default Script Options
# ==============================================================
set opt(nn)      40; # Number of Nodes
set opt(pktsize)  2048;
# ==============================================================
# For topo pattern
# ==============================================================
set opt(TRlength)  100 ;#distance between transmitter and receiver
set opt(TTwidth)    0  ;#distance between transmitters
#===============================================================
set opt(RTSThreshold) 100000;#basic access is employed by default
set opt(CWMin)           31;
set opt(CWMax)         1023;
set opt(Time)          1000;
set sensingTreshdB     5   ;# sensing threshold in dB above noise power

proc usage {} {
    global argv0
    puts "\n usage: $argv0 \[-TRlength lenth\]\[-TTwidth width\]
    \[-RTSThreshold RTSThreshold\]\[-pktsize pktsize]
    \[-interval interval\]\[-CWMin CWMin\]\[-CWMax CWMax\]
    \[-Pt Pt]\[-noisePower noisePower]\[-ReceiverThreshold CPThresh]\n"
}

proc getopt {argc argv} {
	global opt
	for {set i 0} {$i < $argc} {incr i} {
		set arg [lindex $argv $i]
		if {[string range $arg 0 0] != "-"} continue
		set name [string range $arg 1 end]
		set opt($name) [lindex $argv [expr $i+1]]
	}
}
#usage
getopt $argc $argv

set val(chan)   Channel/WirelessChannel;
set val(prop)   Propagation/FreeSpace;
set val(netif)  Phy/WirelessPhy; # Rayleigh fading is added
set val(mac)    Mac/802_11;
set val(ifq)    Queue/DropTail/PriQueue;
set val(ll)     LL;
set val(ant)    Antenna/OmniAntenna;
set val(ifqlen) 5000;
set val(nn)     opt(nn);
set val(rp)     NOAH;

Mac/802_11 set ShortRetryLimit_   10000000       ;
Mac/802_11 set LongRetryLimit_    10000000       ;
Mac/802_11 set RTSThreshold_ $opt(RTSThreshold);

set opt(CSThresh) [expr $noisePower*pow(10,$sensingTreshdB/10.0)]
Phy/WirelessPhy set CSThresh_ $opt(CSThresh);
Phy/WirelessPhy set Pt_ $opt(Pt); # set transmisson power of each node
Phy/WirelessPhy set CPThresh_ $opt(CPThresh); # set receiver threshold
Phy/WirelessPhy set Noise_ $opt(noisePower); # set noise power
Phy/WirelessPhy set freq_ 914e+6;

Mac/802_11 set SlotTime_  0.000009;
Mac/802_11 set SIFS_      0.000016;
Mac/802_11 set CWMin_     $opt(CWMin);
Mac/802_11 set CWMax_     $opt(CWMax);
Mac/802_11 set dataRate_  65.0e6;
Mac/802_11 set basicRate_ 6.0e6;

# generate saturated Poisson traffic
Application/Traffic/Poisson set interval_ 0.0001
Application/Traffic/Poisson set packetSize_ 2048
Application/Traffic/Poisson set maxpkts_ 268435456

set ns_ [new Simulator]
set tracefd [open parrival.tr w]
$ns_ trace-all $tracefd
set topo [new Topography]
$topo load_flatgrid 300 300

create-god $val(nn)
   $ns_ node-config -adhocRouting $val(rp)\
                    -llType $val(ll)\
                    -macType $val(mac)\
                    -ifqType $val(ifq)\
                    -ifqLen $val(ifqlen)\
                    -antType $val(ant)\
                    -propType $val(prop)\
                    -phyType $val(netif)\
                    -channelType $val(chan)\
                    -topoInstance $topo\
                    -agentTrace ON\
                    -routerTrace OFF\
                    -macTrace ON\
                    -ifqTrace ON\
                    -movementTrace OFF

	for {set i 0} {$i < $val(nn) } {incr i} {
		set node_($i) [$ns_ node]	
		$node_($i) random-motion 0;# disable random motion
	}

    for {set i 0} {$i < $val(nn)/2 } {incr i} {
	set m  $i
	$node_($m) set X_ [expr $i* $opt(TTwidth)+20];
	$node_($m) set Y_ 20.0;
	$node_($m) set Z_ 0.0
    }
    for {set i 0} {$i < $val(nn)/2 } {incr i} {
	set m [expr $i+$val(nn)/2]
	$node_($m) set X_ [expr $i* $opt(TTwidth)+20];
 	$node_($m) set Y_ [expr 20+$opt(TRlength)];
	$node_($m) set Z_ 0.0
    }

exec ns r-poigen.tcl -pktsize $opt(pktsize) -nn $opt(nn)
-interval $opt(interval) -rate $opt(rate) >traffic
source traffic

set  Time1 [expr 0.01+$opt(Time)]
for {set i 0} {$i < $val(nn) } {incr i} {
    $ns_ at $opt(Time) "$node_($i) reset";
}
$ns_ at $opt(Time) "stop"
$ns_ at $Time1 "puts \"NS EXITING...\" ; $ns_ halt"
proc stop {} {
    global ns_ tracefd
    $ns_ flush-trace
    close $tracefd
}
puts "Starting Simulation..."
$ns_ run
\end{lstlisting}

% biography section
%
% If you have an EPS/PDF photo (graphicx package needed) extra braces are
% needed around the contents of the optional argument to biography to prevent
% the LaTeX parser from getting confused when it sees the complicated
% \includegraphics command within an optional argument. (You could create
% your own custom macro containing the \includegraphics command to make things
% simpler here.)
%\begin{IEEEbiography}[{\includegraphics[width=1in,height=1.25in,clip,keepaspectratio]{mshell}}]{Michael Shell}
% or if you just want to reserve a space for a photo:

\end{document}